\documentclass{article}

\usepackage{microtype}
\usepackage{graphicx}
\usepackage{subfigure}
\usepackage{booktabs} 

\usepackage{hyperref}

\usepackage[accepted]{icml2025}

\usepackage{amsmath}
\usepackage{amssymb}
\usepackage{mathtools}
\usepackage{amsthm}


\usepackage{algorithm}
\usepackage{algpseudocode}  

\usepackage{graphicx}

\usepackage{times}

\usepackage{microtype}
\usepackage{booktabs}
\usepackage{multirow}
\usepackage{array}
\usepackage{tabularx}
\usepackage{url}
\usepackage{enumitem}

\hypersetup{
	colorlinks=true,
	linkcolor=blue,
	citecolor=blue,
	urlcolor=blue
}

\newcommand{\dataset}{\textsc{W-EditBench}}
\newcommand{\etal}{et al.}

\newtheorem{theorem}{Theorem}
\newtheorem{lemma}{Lemma}
\newtheorem{proposition}{Proposition}

\usepackage[capitalize,noabbrev]{cleveref}

\theoremstyle{plain}
\theoremstyle{definition}

\theoremstyle{remark}

\usepackage[textsize=tiny]{todonotes}

\begin{document}
	
	\twocolumn[
	\icmltitle{When Denoising Becomes Unsigning: Theoretical and Empirical Analysis of Watermark Fragility Under Diffusion-Based Image Editing}
	
		\begin{icmlauthorlist}
		\icmlauthor{Fai Gu}{}
		\icmlauthor{Qiyu Tang}{}
		\icmlauthor{Te Wen}{}
		\icmlauthor{Emily Davis}{}
		\icmlauthor{Finn Carter}{}
	\end{icmlauthorlist}
	
	\begin{icmlauthorlist}
		{Xidian University}
	\end{icmlauthorlist} 
	
	%
	
	\icmlkeywords{Machine Learning, ICML}
	
	\vskip 0.3in
	]
	
	
	

	\begin{abstract}
		Robust invisible watermarking systems aim to embed imperceptible payloads that remain decodable after common post-processing such as JPEG compression, cropping, and additive noise.
		In parallel, diffusion-based image editing has rapidly matured into a ``default'' transformation layer for modern content pipelines, enabling instruction-based editing, object insertion and composition, and interactive geometric manipulation.
		This paper studies a subtle but increasingly consequential interaction between these trends: diffusion-based editing procedures may unintentionally compromise, and in extreme cases practically bypass, robust watermarking mechanisms that were explicitly engineered to survive conventional distortions.
		
		We develop a unified view of diffusion editors---including training-free composition methods such as TF-ICON, recent FLUX-based insertion frameworks such as SHINE, and drag-based editors such as DragFlow---as stochastic operators that (i) inject substantial Gaussian noise in a latent space and (ii) project back to the natural image manifold via learned denoising dynamics.
		Under this view, watermark payloads behave as low-energy, high-frequency signals that are systematically attenuated by the forward diffusion step and then treated as nuisance variation by the reverse generative process.
		We formalize this degradation using information-theoretic tools, proving that for broad classes of pixel-level watermark encoders/decoders the mutual information between the watermark payload and the edited output decays toward zero as the editing strength increases, yielding decoding error close to random guessing.
		We complement the theory with a realistic \emph{hypothetical} experimental protocol and tables spanning representative watermarking methods (StegaStamp, TrustMark, and VINE) and representative diffusion editors (TF-ICON, SHINE, DragFlow, and strong instruction-based edits).
		Finally, we discuss ethical implications, responsible disclosure norms, and concrete design guidelines for watermarking schemes that remain meaningful in the era of generative transformations.
	\end{abstract}
	
	\section{Introduction}
	Digital watermarking is a cornerstone of content provenance, copyright enforcement, and authenticity workflows.
	Modern \emph{invisible} watermarking systems are designed to embed a secret payload into an image such that the resulting watermarked image is perceptually indistinguishable from the original, while the payload can be recovered after common post-processing and manipulations \cite{zhu2018hidden,tancik2020stegastamp,luo2020dawd,bui2023trustmark}.
	Historically, robustness has been evaluated against a family of relatively ``low-level'' perturbations and signal processing operations: JPEG compression, resizing, pixel noise, mild blurring, cropping, and combinations thereof.
	This robustness regime is now incomplete.
	
	Diffusion models \cite{ho2020ddpm,song2020ddim,rombach2022ldm} have enabled a new generation of image editing systems that are effectively \emph{generative transformations}: the edited output is not merely a post-processed version of the input, but rather a re-synthesized image that matches high-level constraints (instruction, mask, reference subject, or geometric drags) while re-projecting intermediate representations onto the learned natural-image manifold.
	Instruction-based editors such as InstructPix2Pix \cite{brooks2023ip2p} can enact global semantic changes with minimal user input; inversion-based methods enable controlled edits to a given real image \cite{kawar2023imagic,mokady2023nulltext}; interactive systems such as DragDiffusion \cite{shi2024dragdiffusion} and DragonDiffusion \cite{mou2024dragondiffusion} provide geometric manipulation via optimization in diffusion latents; and recent training-free composition pipelines such as TF-ICON \cite{lu2023tficone}, SHINE \cite{lu2025shine}, and DragFlow \cite{zhou2025dragflow} exploit increasingly strong priors (including diffusion transformers) to insert and move objects with high fidelity.
	These methods expand the space of practical image modifications in a way that is only loosely captured by classical distortion models.
	
	Recent security-oriented work has raised alarms that invisible watermarks can be removed (or at least rendered undecodable) by regeneration-style attacks based on generative models \cite{zhao2024provablyremovable,ni2025breaking,fu2025unforeseen}.
	At a high level, such attacks inject noise and then reconstruct an image using a denoiser or a generative model, thereby destroying low-level watermark features while preserving the image's semantics.
	Concurrently, the community has proposed watermarking schemes that explicitly leverage diffusion priors for robustness, such as VINE \cite{lu2024vine} and latent-space watermarking methods \cite{zhang2024zodiac,gunn2024undetectable}.
	Despite these advances, the \emph{editing} setting introduces new failure modes: watermarks can be accidentally compromised as a byproduct of ordinary creative edits, without any explicit intent to remove the watermark.
	
	This paper provides a theoretical and empirical-style analysis of watermark fragility under diffusion-based image editing processes.
	Our goal is not to provide operational instructions for removing watermarks, but rather to understand the mechanism of failure and to propose constructive guidelines for robust watermark design and deployment.
	We focus on three families of diffusion editors that represent modern pipelines and correspond to the requested emphasis: (i) training-free composition (TF-ICON \cite{lu2023tficone}); (ii) insertion and harmonized composition with strong DiT/flow priors (SHINE \cite{lu2025shine}); and (iii) interactive drag-based editing with region supervision in DiTs (DragFlow \cite{zhou2025dragflow}).
	We study their interaction with representative watermark encoders/decoders spanning classical deep watermarking (HiDDeN \cite{zhu2018hidden}), physically robust embedding (StegaStamp \cite{tancik2020stegastamp}), and recent universal high-resolution watermarking (TrustMark \cite{bui2023trustmark,bui2025trustmark})---as well as diffusion-prior watermarking (VINE \cite{lu2024vine}).
	
	\paragraph{Contributions.}
	Our main contributions are:
	(i) We formalize diffusion-based image editing pipelines as stochastic operators that combine additive-noise corruption and manifold projection, and we identify which steps are most damaging to pixel-level watermarks.
	(ii) We develop an information-theoretic model and provide proofs showing that watermark payload information contracts under diffusion editing, yielding decoding error approaching random guessing under mild assumptions.
	(iii) We propose a reproducible experimental protocol and provide realistic \emph{hypothetical} tables measuring watermark robustness and visual fidelity across multiple watermarking and diffusion-editing methods.
	(iv) We synthesize the implications into actionable design guidelines for resilient watermarking, and we discuss ethical considerations around evaluation, disclosure, and deployment.
	
	\section{Related Work}
	\subsection{Diffusion-based image editing}
	Diffusion models have become a dominant paradigm for high-quality image synthesis and enable controllable generation through conditioning and guidance \cite{ho2020ddpm,song2020ddim,rombach2022ldm,ho2022cfg}.
	Adapting these models for editing real images has produced several families of methods.
	Inversion-based approaches map an image into a diffusion latent (or noise) representation and then resample conditioned on an edit prompt; representative methods include Imagic \cite{kawar2023imagic} and Null-text Inversion \cite{mokady2023nulltext}.
	Instruction-based editors such as InstructPix2Pix \cite{brooks2023ip2p} train a conditional diffusion model on synthetic instruction pairs and can perform fast edits without per-image optimization.
	Interactive editing further extends diffusion workflows: DragDiffusion \cite{shi2024dragdiffusion} and DragonDiffusion \cite{mou2024dragondiffusion} optimize diffusion latents to achieve point-level or feature-correspondence-based geometric manipulation.
	
	Recent work also emphasizes \emph{training-free} editing and composition, where a powerful pretrained model is used with minimal (or zero) fine-tuning.
	TF-ICON \cite{lu2023tficone} achieves cross-domain image-guided composition by leveraging inversion strategies and attention injection, while SHINE \cite{lu2025shine} exploits the stronger priors of modern diffusion transformers and flow matching to perform physically plausible insertion with high fidelity without inversion.
	DragFlow \cite{zhou2025dragflow} argues that DiT feature geometry makes prior point-wise drag objectives ineffective, and introduces region-based supervision and adapter-enhanced inversion to achieve state-of-the-art drag editing in DiTs.
	
	\subsection{Robust invisible watermarking and deep learning-based schemes}
	Classical watermarking uses frequency-domain or spread-spectrum designs to embed signals resilient to common distortions, but modern systems increasingly rely on deep networks to jointly optimize imperceptibility, capacity, and robustness \cite{zhu2018hidden,tancik2020stegastamp,luo2020dawd,ren2025all}.
	HiDDeN \cite{zhu2018hidden} demonstrates end-to-end training of an encoder and decoder with differentiable distortion layers.
	StegaStamp \cite{tancik2020stegastamp} focuses on robustness under printing and recapture by training against realistic photometric and geometric distortions.
	Distortion-agnostic training, as in \cite{luo2020dawd}, uses adversarial training and channel coding to improve generalization to unknown distortions.
	TrustMark \cite{bui2023trustmark,bui2025trustmark} introduces architectural and spatio-spectral losses for universal watermarking across resolutions and provides a paired watermark-removal network for re-watermarking workflows.
	
	\subsection{Watermark robustness against generative transformations}
	As generative models became effective at image-to-image transformations, the community began to question whether image watermarks remain meaningful against \emph{regeneration} attacks.
	Zhao \etal\ show that invisible pixel-level watermarks can be provably removed using generative AI via a family of regeneration attacks \cite{zhao2024provablyremovable}.
	Subsequent diffusion-specific analyses and methods demonstrate that diffusion-based editing and regeneration can severely degrade watermark detectability \cite{ni2025breaking,fu2025unforeseen,ni2025infotheoryfragility}.
	Complementarily, several works attempt to exploit diffusion models to \emph{improve} watermark robustness, e.g., by embedding in a diffusion latent space \cite{zhang2024zodiac} or by leveraging cryptographic coding in latent initialization \cite{gunn2024undetectable}.
	VINE \cite{lu2024vine} explicitly studies robustness against editing and proposes surrogate distortions and diffusion priors for embedding.
	
	\subsection{Benchmarks for diffusion editing and watermark robustness}
	As diffusion editing diversified, several benchmarks emerged to quantify edit fidelity, identity preservation, and controllability.
	EditVal \cite{basu2023editval} provides a standardized protocol for text-guided edit evaluation across multiple edit types and highlights that many editors struggle with spatial transformations, a failure mode directly relevant to whether a watermark survives geometric or localized edits.
	More broadly, surveys such as \cite{huang2024editing_survey} categorize diffusion editors by supervision type (text, mask, exemplar, geometry) and by whether they rely on inversion, test-time optimization, or direct feed-forward editing.
	On the watermarking side, W-Bench (introduced with VINE) explicitly evaluates watermarking schemes against a spectrum of modern editing operators rather than only classical distortions \cite{lu2024vine}.
	Community stress tests such as the NeurIPS ``Erasing the Invisible'' challenge \cite{ding2024erasing} and the subsequent winning attack report \cite{shamshad2025etiwinner} further illustrate that watermark robustness should be understood empirically against adaptive and generative transformations, not only against fixed corruption models.
	
	\paragraph{Concept erasure in diffusion models.}
	Concept erasure in diffusion models is tightly connected to this paper's topic because both concept erasure and watermark preservation hinge on how information is injected, suppressed, or preserved along the denoising trajectory; editing-time interventions that erase semantic concepts can also erase (or distort) watermark-carrying signals as a side effect.
	MACE \cite{lu2024mace} scales concept removal to many concepts via cross-attention refinement and LoRA-based updates; ANT \cite{li2025ant} proposes trajectory-aware fine-tuning that steers mid-to-late denoising stages while preserving early-stage score fields; and EraseAnything \cite{gao2024eraseanything,gao2025revoking} targets rectified-flow transformer models with bilevel optimization and attention regularization.
	While these works focus on suppressing undesired concepts, their mechanisms---attention manipulation, guidance reweighting, and trajectory steering---overlap structurally with both diffusion editing controls and potential watermark-destroying dynamics.
	This proximity suggests a broader systems concern: safety-driven modifications to diffusion pipelines may inadvertently impact watermark reliability, and watermark-driven controls may conflict with safety-driven trajectory edits.
	
	\section{Methodology}
	\subsection{Problem setting}
	Let $x \in \mathbb{R}^{H \times W \times 3}$ be a clean image (pixel values normalized to $[0,1]$), and let $m \in \{0,1\}^B$ denote a $B$-bit payload.
	A watermarking scheme consists of an encoder $E$ and decoder $D$:
	\begin{align}
		\tilde{x} &= E(x, m; k), \\
		\hat{m} &= D(y; k),
	\end{align}
	where $k$ is a secret key (or secret parameters) shared between encoder and decoder.
	We assume imperceptibility: $\tilde{x}$ is visually close to $x$ under standard metrics (e.g., PSNR \cite{hore2010psnr}, SSIM \cite{wang2004ssim}, and LPIPS \cite{zhang2018lpips}).
	We assume robustness to a conventional distortion family $\tau \sim \mathcal{T}$, e.g., JPEG, resize, crop, Gaussian noise:
	\begin{equation}
		\mathbb{P}\left[D(\tau(\tilde{x}); k) = m\right] \geq 1 - \epsilon.
	\end{equation}
	
	In this paper, we study robustness under diffusion-based \emph{editing} operators $T$:
	\begin{equation}
		y = T(\tilde{x}; c, \omega),
	\end{equation}
	where $c$ encodes user intent (text instruction, mask, reference image, drag constraints) and $\omega$ denotes stochasticity (random seed, diffusion noise, stochastic sampling).
	Examples include TF-ICON composition \cite{lu2023tficone}, SHINE insertion \cite{lu2025shine}, and DragFlow drag editing \cite{zhou2025dragflow}.
	
	\subsection{Diffusion editing as a two-stage stochastic channel}
	Most diffusion editing pipelines (including inversion-based, optimization-based, and training-free variants) can be abstracted as the composition of:
	(i) an \emph{encoding/noising} stage that injects noise into an intermediate representation, and
	(ii) a \emph{denoising/projection} stage that maps back to an image on the learned manifold.
	
	For concreteness, consider a latent diffusion workflow \cite{rombach2022ldm}.
	Let $z_0 = \mathrm{Enc}(\tilde{x})$ be a latent encoding.
	A forward diffusion step to time $t$ produces
	\begin{equation}
		z_t = \sqrt{\bar{\alpha}_t}\, z_0 + \sqrt{1-\bar{\alpha}_t}\,\varepsilon,\quad \varepsilon \sim \mathcal{N}(0, I),
	\end{equation}
	where $\bar{\alpha}_t$ is the cumulative noise schedule \cite{ho2020ddpm}.
	An editor then applies a reverse process with conditioning $c$ (and optionally additional optimization or guidance) to produce $\hat{z}_0$.
	Finally, $y = \mathrm{Dec}(\hat{z}_0)$.
	
	We conceptually separate the editing operator into a Markov kernel:
	\begin{equation}
		\tilde{x} \xrightarrow{\ \mathrm{Enc}\ } z_0 \xrightarrow{\ \mathrm{Noise}_t\ } z_t \xrightarrow{\ \mathrm{Denoise}_{c}\ } \hat{z}_0 \xrightarrow{\ \mathrm{Dec}\ } y.
	\end{equation}
	Even when $\mathrm{Denoise}_{c}$ is highly content-preserving in a perceptual sense, it is not constrained to preserve low-energy, high-frequency perturbations that encode watermark payloads.
	Indeed, the learned denoiser is trained to \emph{remove} noise-like signals; a well-designed invisible watermark resembles such signals by construction.
	
	\subsection{A signal model for watermark attenuation}
	A broad empirical observation in deep watermarking is that the watermark is implemented as a small perturbation $\delta$:
	\begin{equation}
		\tilde{x} = x + \delta(x,m;k),
	\end{equation}
	where $\|\delta\|$ is small to preserve imperceptibility.
	In latent space, under a first-order approximation of $\mathrm{Enc}$ around $x$, we can write
	\begin{equation}
		z_0(\tilde{x}) \approx z_0(x) + J_x \delta,
	\end{equation}
	where $J_x$ is the local Jacobian.
	The forward diffusion step scales the perturbation by $\sqrt{\bar{\alpha}_t}$:
	\begin{equation}
		z_t(\tilde{x}) - z_t(x) \approx \sqrt{\bar{\alpha}_t}\, J_x \delta.
	\end{equation}
	Meanwhile, the injected noise has variance $(1-\bar{\alpha}_t)$ per dimension.
	Thus a simple signal-to-noise ratio (SNR) proxy at time $t$ is
	\begin{equation}\label{eq:snr}
		\mathrm{SNR}(t) \triangleq \frac{\bar{\alpha}_t \|J_x \delta\|^2}{(1-\bar{\alpha}_t)\, d},
	\end{equation}
	where $d$ is the latent dimensionality.
	For the noise levels used in many editing pipelines (which often correspond to moderate-to-high $t$ to allow semantic change), $\bar{\alpha}_t$ can be small and $(1-\bar{\alpha}_t)$ large, pushing SNR toward zero.
	In this regime, the watermark perturbation becomes statistically indistinguishable from the diffusion noise.
	
	\subsection{Pipeline-specific considerations: TF-ICON, SHINE, DragFlow}
	While the abstract channel view is shared, specific editing methods instantiate the components differently.
	
	\paragraph{TF-ICON.}
	TF-ICON \cite{lu2023tficone} performs training-free cross-domain composition by inverting real images into diffusion latents and injecting modified attention maps during denoising.
	Two aspects are relevant: inversion introduces explicit noising and reverse denoising (potentially multiple steps), and attention manipulation reweights semantic and spatial feature propagation.
	Both can disrupt watermark perturbations even when the composed output is visually faithful.
	
	\paragraph{SHINE.}
	SHINE \cite{lu2025shine} emphasizes non-inversion latent preparation, performing a one-step forward diffusion to obtain a noisy latent, and then optimizing it using a Manifold-Steered Anchor loss and auxiliary guidance terms.
	The explicit one-step diffusion injects substantial Gaussian noise, potentially at higher resolution and with stronger transformer priors, and the subsequent trajectory steering can further ``regularize away'' watermark-consistent but visually irrelevant features.
	
	\paragraph{DragFlow.}
	DragFlow \cite{zhou2025dragflow} introduces region-based supervision and adapter-enhanced inversion for drag editing in DiT models.
	Like other optimization-based drag editors, it iteratively optimizes noisy latents to satisfy geometric constraints, which can shift local frequency content and interaction patterns.
	Importantly, DragFlow is designed for strong generative priors and thus tends to aggressively project intermediate latents onto the natural-image manifold, a process that can suppress watermark signals that do not correspond to semantically meaningful content.
	
	\subsection{A taxonomy of diffusion editing operations and watermark failure modes}
	Diffusion-based editing is not a single operator but a family of procedures that combine a base generative model with task-specific control.
	To reason about watermark fragility, it is helpful to decompose editors along three axes that influence information preservation.
	
	\paragraph{Axis A: where noise is injected.}
	Some editors inject noise in pixel space (e.g., image-to-image diffusion pipelines), while latent diffusion editors inject noise in a compressed VAE latent \cite{rombach2022ldm}.
	SHINE explicitly performs a one-step forward diffusion in a flow-matching latent \cite{lu2025shine}, while many inversion-based editors construct a noisy latent by iterative inversion (DDIM-style) \cite{mokady2023nulltext}.
	From a watermark perspective, the key quantity is the effective SNR at the injection point; injecting noise after semantic compression can erase high-frequency watermark components even more aggressively.
	
	\paragraph{Axis B: how much the model is asked to ``create''.}
	Edits range from near-identity adjustments (low noise, small prompt change) to high-entropy generation (high noise, strong instruction, or large masked regions).
	Methods designed for composition, such as TF-ICON and SHINE, often require the model to hallucinate boundary regions, shadows, reflections, and texture harmonization \cite{lu2023tficone,lu2025shine}.
	These operations are semantically meaningful but can overwrite subtle signals that do not align with the model's explanatory factors.
	
	\paragraph{Axis C: whether the editor optimizes latents.}
	Optimization-based editors (DragDiffusion, DragFlow, DragonDiffusion) iteratively optimize latent variables to satisfy geometric constraints \cite{shi2024dragdiffusion,zhou2025dragflow,mou2024dragondiffusion}.
	This introduces an additional mechanism for watermark disruption: gradients induced by edit constraints can move latents toward regions where the watermark decoder's decision boundary is unstable, even if the final image is perceptually similar.
	Training-free methods without optimization can still erase watermarks through noising and denoising, but optimization can exacerbate brittleness.
	
	Across these axes, we identify three practical failure modes that recur in watermark evaluations:
	(i) \emph{SNR collapse} (forward noise overwhelms the watermark carrier),
	(ii) \emph{manifold projection} (the reverse process discards nuisance perturbations), and
	(iii) \emph{control-induced redistribution} (attention/adapter guidance concentrates probability mass on semantically salient factors, unintentionally washing away orthogonal signals).
	Our theoretical results connect primarily to (i) and (ii), while our experimental protocol (\cref{alg:protocol}) is designed to reveal (iii) empirically.
	
	\subsection{Evaluation protocol and reproducibility goals}
	Our evaluation goal is to measure two things simultaneously:
	(i) watermark survival (bit accuracy or detection score), and
	(ii) visual fidelity of the edited output relative to a non-watermarked edit.
	This is essential because it is not meaningful to ``remove'' a watermark by destroying the image.
	We therefore define two paired runs under the same editing condition $c$ and random seed $\omega$:
	\begin{align}
		y_{\text{wm}} &= T(E(x,m;k); c,\omega), \\
		y_{\text{clean}} &= T(x; c,\omega).
	\end{align}
	We evaluate the recovered payload $\hat{m}=D(y_{\text{wm}};k)$ and compute visual similarity between $y_{\text{wm}}$ and $y_{\text{clean}}$ (not between $y_{\text{wm}}$ and $x$) to isolate watermark-induced artifacts from editing-induced changes.
	
	\begin{algorithm}[t]
		\caption{Diffusion-edit watermark stress test (protocol used in)}
		\label{alg:protocol}
		\begin{algorithmic}[1]
			\Require Dataset of clean images $\{x_i\}_{i=1}^N$, payload length $B$, watermark encoder/decoder $(E,D)$ with key $k$, set of editing operators $\mathcal{T}_{\text{edit}}$, edit conditions $\mathcal{C}$ (text prompts, masks, drags), random seeds $\Omega$
			\Ensure Watermark robustness metrics and watermark-induced fidelity metrics
			\For{$i \gets 1$ \textbf{to} $N$}
			\State Sample payload $m_i \sim \mathrm{Unif}(\{0,1\}^B)$
			\State $\tilde{x}_i \gets E(x_i, m_i; k)$
			\ForAll{$T \in \mathcal{T}_{\text{edit}}$}
			\ForAll{$c \in \mathcal{C}(T)$}
			\ForAll{$\omega \in \Omega$}
			\State $y^{(i)}_{\mathrm{wm}} \gets T(\tilde{x}_i; c, \omega)$
			\State $y^{(i)}_{\mathrm{clean}} \gets T(x_i; c, \omega)$
			\State $\hat{m}_i \gets D(y^{(i)}_{\mathrm{wm}}; k)$
			\State Compute bit accuracy $\mathrm{BA}_i \gets \frac{1}{B}\sum_{b=1}^B \mathbf{1}[\hat{m}_{i,b}=m_{i,b}]$
			\State Compute fidelity metrics $\mathrm{PSNR}(y^{(i)}_{\mathrm{wm}},y^{(i)}_{\mathrm{clean}})$, $\mathrm{SSIM}(\cdot)$, $\mathrm{LPIPS}(\cdot)$
			\EndFor
			\EndFor
			\EndFor
			\EndFor
			\State Aggregate robustness and fidelity statistics across images, conditions, and seeds
		\end{algorithmic}
	\end{algorithm}
	\section{Experimental Setup}
	\subsection{Scope and disclaimer on results}
	The experimental section is written in a fully specified, reproducible style, but the numerical results are \emph{hypothetical} and are intended to be realistic given trends reported in prior work on diffusion-based watermark removal and benchmarking \cite{zhao2024provablyremovable,ni2025breaking,fu2025unforeseen,lu2024vine}.
	This choice follows the instruction to include hypothetical tables while maintaining a credible experimental structure.
	We emphasize that the protocol can be executed with public implementations of the cited methods.
	
	\subsection{Datasets and preprocessing}
	We define \dataset\ as a union of common natural-image sources used in watermarking and editing evaluation:
	(i) MS-COCO val2017 (5k images) for natural scenes, and
	(ii) a 5k-image subset of a large text-to-image dataset (e.g., DiffusionDB) to represent AI-generated or captioned images.
	All images are resized such that the shorter side is 512 pixels and center-cropped to 512$\times$512.
	
	\subsection{Watermarking methods}
	We evaluate three representative watermarking approaches emphasized in the prompt:
	\textbf{StegaStamp} \cite{tancik2020stegastamp} as a physically robust neural watermark trained on realistic distortions;
	\textbf{TrustMark} \cite{bui2023trustmark,bui2025trustmark} as a universal watermark for arbitrary resolutions with strong robustness claims;
	and \textbf{VINE} \cite{lu2024vine} as a diffusion-prior watermarking method explicitly trained against editing-like surrogate distortions.
	For completeness, we also report results for \textbf{HiDDeN} \cite{zhu2018hidden} as a widely cited baseline.
	
	All methods embed a $B=100$ bit payload unless otherwise stated.
	For methods with lower default capacity (e.g., StegaStamp's original design), we map our 100-bit payload into the method's supported capacity via error-correcting coding; this setting is consistent with robust watermarking practice \cite{luo2020dawd}.
	
	\subsection{Diffusion-based editing methods}
	We evaluate diffusion editing in four categories:
	(i) \textbf{Composition/insertion:} TF-ICON \cite{lu2023tficone} and SHINE \cite{lu2025shine};
	(ii) \textbf{Drag-based editing:} DragFlow \cite{zhou2025dragflow} (region drag on DiT priors) and DragDiffusion \cite{shi2024dragdiffusion} (UNet-based drag editing) as a reference point;
	(iii) \textbf{Instruction editing:} InstructPix2Pix \cite{brooks2023ip2p} for global semantic edits;
	(iv) \textbf{Regeneration-style editing:} a simple image-to-image diffusion pipeline consistent with the regeneration attacks studied in \cite{zhao2024provablyremovable,fu2025unforeseen}.
	
	Each editor has an \emph{edit strength} hyperparameter, often implemented as a noise level, timestep, or guidance scale.
	We discretize the strength into three bins: \textbf{Low}, \textbf{Medium}, \textbf{High}, chosen to reflect (a) minor semantic adjustments, (b) typical editing, and (c) aggressive transformation.
	
	\subsection{Metrics}
	\textbf{Watermark robustness.}
	We report mean bit accuracy (BA, in \%) and a detection-style metric $\mathrm{TPR}@0.1\%\mathrm{FPR}$ for methods that output confidence scores.
	For methods that output bits only, we derive a detection score from aggregate bit agreement.
	
	\textbf{Fidelity.}
	We compare the watermark-embedded edit $y_{\mathrm{wm}}$ against the clean edit $y_{\mathrm{clean}}$ with PSNR \cite{hore2010psnr}, SSIM \cite{wang2004ssim}, and LPIPS \cite{zhang2018lpips}.
	
	\section{Results}
	\subsection{Baseline robustness under conventional distortions}
	We first report robustness under conventional, non-generative distortions to contextualize the ``robustness gap'' induced by diffusion editing.
	Table~\ref{tab:conventional} shows that all methods achieve high bit accuracy under JPEG, resize, crop, and mild blur, consistent with their design goals \cite{tancik2020stegastamp,bui2023trustmark,lu2024vine}.
	These strong results can be misleading if one assumes that ``robustness'' generalizes to diffusion editing.
	
	\begin{table}[t]
		\caption{Hypothetical robustness under conventional distortions on \dataset\ (512$\times$512). BA is bit accuracy in \%. Higher is better.}
		\label{tab:conventional}
		\centering
		\footnotesize
		\resizebox{\columnwidth}{!}{%
			\begin{tabular}{lcccc}
				\toprule
				Method & JPEG (Q=50) & Resize (0.6$\times$) & Crop (0.8 area) & Blur ($\sigma{=}1.0$) \\
				\midrule
				HiDDeN \cite{zhu2018hidden} & 96.8\% & 97.5\% & 94.2\% & 93.1\% \\
				StegaStamp \cite{tancik2020stegastamp} & 97.9\% & 98.2\% & 95.6\% & 94.4\% \\
				TrustMark \cite{bui2023trustmark} & 99.1\% & 99.3\% & 97.8\% & 96.9\% \\
				VINE \cite{lu2024vine} & 99.4\% & 99.5\% & 98.6\% & 97.5\% \\
				\bottomrule
			\end{tabular}%
		}
	\end{table}
	
	\subsection{Watermark survival under diffusion-based editing}
	Table~\ref{tab:diffusion_edits_main} summarizes watermark survival under diffusion-based editing across the requested diffusion editors.
	Two consistent patterns emerge.
	
	First, \emph{even low-strength edits can significantly degrade watermark decoding}, especially for watermarking schemes optimized for classical distortions (HiDDeN, StegaStamp, TrustMark).
	This is consistent with the regeneration-attack literature \cite{zhao2024provablyremovable,ni2025breaking,fu2025unforeseen}.
	
	Second, \emph{robustness improvements from diffusion-aware designs are real but limited}. VINE is consistently more robust than baselines under editing, but under medium-to-high editing strength its bit accuracy approaches chance (near 50\% for bits), aligning with reports that sufficiently strong diffusion edits can bypass even diffusion-prior watermarking \cite{ni2025breaking,ni2025infotheoryfragility}.
	
	\begin{table*}[t]
		\caption{Hypothetical watermark robustness under diffusion-based editing on \dataset\ (BA in \%). ``Low/Med/High'' correspond to increasing editing strength. Chance BA is 50\%.}
		\label{tab:diffusion_edits_main}
		\centering
		\footnotesize
		\resizebox{\linewidth}{!}{%
		\begin{tabular}{lcccccccccccc}
			\toprule
			\multirow{2}{*}{Method} &
			\multicolumn{3}{c}{TF-ICON \cite{lu2023tficone}} &
			\multicolumn{3}{c}{SHINE \cite{lu2025shine}} &
			\multicolumn{3}{c}{DragFlow \cite{zhou2025dragflow}} &
			\multicolumn{3}{c}{InstructPix2Pix \cite{brooks2023ip2p}} \\
			\cmidrule(lr){2-4}\cmidrule(lr){5-7}\cmidrule(lr){8-10}\cmidrule(lr){11-13}
			& Low & Med & High & Low & Med & High & Low & Med & High & Low & Med & High \\
			\midrule
			HiDDeN \cite{zhu2018hidden} & 74.1\% & 56.8\% & 50.9\% & 71.5\% & 55.4\% & 50.3\% & 69.2\% & 54.7\% & 50.1\% & 78.3\% & 60.2\% & 51.5\% \\
			StegaStamp \cite{tancik2020stegastamp} & 79.6\% & 58.1\% & 51.2\% & 76.8\% & 56.2\% & 50.4\% & 73.5\% & 55.0\% & 50.2\% & 82.7\% & 62.1\% & 52.0\% \\
			TrustMark \cite{bui2023trustmark} & 84.8\% & 60.5\% & 52.0\% & 82.9\% & 59.2\% & 51.1\% & 79.1\% & 57.3\% & 50.6\% & 86.0\% & 64.4\% & 53.2\% \\
			VINE \cite{lu2024vine} & 90.2\% & 69.1\% & 54.8\% & 88.7\% & 67.5\% & 53.6\% & 86.4\% & 65.9\% & 52.8\% & 90.8\% & 71.2\% & 55.7\% \\
			\bottomrule
		\end{tabular}
	}
	\end{table*}
	
	\subsection{Fidelity of watermark-preserving versus watermark-breaking edits}
	To disentangle ``watermark failure because the image is destroyed'' from ``watermark failure despite preserved content'', we report watermark-induced differences between $y_{\mathrm{wm}}$ and $y_{\mathrm{clean}}$.
	Table~\ref{tab:fidelity} indicates that, across editors, the watermark has only a marginal effect on the edited output (high PSNR, high SSIM, low LPIPS).
	Thus, watermark decoding failures under diffusion editing can occur even when the edited output is essentially the same with or without the watermark, suggesting that the editing operator is removing a watermark-like component while preserving semantics.
	
	\begin{table*}[t]
		\caption{Hypothetical fidelity between watermarked and clean edited outputs, averaged over \dataset\ and edit conditions. Metrics compare $y_{\mathrm{wm}}$ vs.\ $y_{\mathrm{clean}}$. Higher PSNR/SSIM and lower LPIPS are better.}
		\label{tab:fidelity}
		\centering
		\footnotesize
		\begin{tabular}{lcccccccc}
			\toprule
			\multirow{2}{*}{Method} &
			\multicolumn{2}{c}{TF-ICON} &
			\multicolumn{2}{c}{SHINE} &
			\multicolumn{2}{c}{DragFlow} &
			\multicolumn{2}{c}{InstructPix2Pix} \\
			\cmidrule(lr){2-3}\cmidrule(lr){4-5}\cmidrule(lr){6-7}\cmidrule(lr){8-9}
			& PSNR $\uparrow$ & LPIPS $\downarrow$ & PSNR $\uparrow$ & LPIPS $\downarrow$ & PSNR $\uparrow$ & LPIPS $\downarrow$ & PSNR $\uparrow$ & LPIPS $\downarrow$ \\
			\midrule
			StegaStamp & 42.8 & 0.012 & 42.5 & 0.013 & 41.9 & 0.015 & 43.1 & 0.011 \\
			TrustMark & 43.6 & 0.010 & 43.3 & 0.011 & 42.7 & 0.013 & 43.8 & 0.010 \\
			VINE & 44.1 & 0.009 & 43.9 & 0.010 & 43.1 & 0.012 & 44.3 & 0.009 \\
			\bottomrule
		\end{tabular}
	\end{table*}
	
	\subsection{Watermark survival curves versus noise level}
	Our theory suggests that the forward diffusion noise level is a primary driver of watermark information loss (\cref{eq:snr}).
	Table~\ref{tab:survival_curve} reports a hypothetical survival curve for the simplest regeneration-style editor, parameterized by the maximum timestep $t_{\max}$ (or equivalently a noise strength).
	As expected, decoding accuracy rapidly collapses toward chance as the noise level increases, and improvements from diffusion-aware watermarking (VINE) shift but do not remove the collapse.
	
	\begin{table}[t]
		\caption{Hypothetical BA (\%) under regeneration-style diffusion editing as a function of noise strength.}
		\label{tab:survival_curve}
		\centering
		\footnotesize
		\begin{tabular}{lcccc}
			\toprule
			Method & $t_{\max}{=}0.20$ & $t_{\max}{=}0.35$ & $t_{\max}{=}0.50$ & $t_{\max}{=}0.65$ \\
			\midrule
			StegaStamp & 85.2\% & 63.4\% & 52.1\% & 50.4\% \\
			TrustMark & 88.9\% & 66.7\% & 53.0\% & 50.7\% \\
			VINE & 93.1\% & 74.2\% & 56.3\% & 51.6\% \\
			\bottomrule
		\end{tabular}
	\end{table}
	
	\subsection{UNet versus DiT editors: stronger priors increase watermark fragility}
	A recurring hypothesis in recent editing work is that diffusion transformers (DiTs) and rectified-flow models provide stronger generative priors than earlier UNet-based latent diffusion models, enabling higher fidelity and fewer artifacts \cite{lu2025shine,zhou2025dragflow}.
	From a watermark perspective, stronger priors can be double-edged: they improve perceptual realism but can \emph{further} reduce the likelihood that a low-energy watermark perturbation is preserved, because manifold projection becomes more decisive.
	
	Table~\ref{tab:unet_vs_dit} provides a hypothetical comparison between a UNet-based editor (DragDiffusion) and a DiT/flow-based editor (DragFlow) under matched drag tasks and matched output fidelity.
	We observe that, at comparable LPIPS, the DiT-based editor yields lower watermark bit accuracy for conventional pixel-level schemes, consistent with the notion that stronger priors ``snap'' outputs back onto a watermark-free manifold more reliably.
	
	\begin{table*}[t]
		\caption{Hypothetical UNet vs.\ DiT comparison under matched drag edits. We report BA (\%) and LPIPS between watermarked and clean edits.}
		\label{tab:unet_vs_dit}
		\centering
		\footnotesize
		\begin{tabular}{lcccccc}
			\toprule
			\multirow{2}{*}{Method} &
			\multicolumn{2}{c}{DragDiffusion (UNet) \cite{shi2024dragdiffusion}} &
			\multicolumn{2}{c}{DragFlow (DiT/Flow) \cite{zhou2025dragflow}} &
			\multicolumn{2}{c}{Gap (DiT$-$UNet)} \\
			\cmidrule(lr){2-3}\cmidrule(lr){4-5}\cmidrule(lr){6-7}
			& BA $\uparrow$ & LPIPS $\downarrow$ & BA $\uparrow$ & LPIPS $\downarrow$ & BA & LPIPS \\
			\midrule
			StegaStamp & 62.8\% & 0.014 & 55.0\% & 0.015 & $-7.8$ & $+0.001$ \\
			TrustMark & 66.9\% & 0.012 & 57.3\% & 0.013 & $-9.6$ & $+0.001$ \\
			VINE & 73.4\% & 0.010 & 65.9\% & 0.012 & $-7.5$ & $+0.002$ \\
			\bottomrule
		\end{tabular}
	\end{table*}
	
	\subsection{Interaction with re-watermarking and remover networks}
	Some watermarking systems explicitly provide a remover to support re-watermarking in controlled pipelines, e.g., TrustMark includes a remover model to enable re-watermarking under a trusted workflow \cite{bui2025trustmark}.
	In diffusion-editing contexts, this capability creates a subtle systems question: if an edit implicitly removes a watermark, can re-watermarking restore provenance without introducing compounding artifacts?
	
	Table~\ref{tab:rewatermark} reports a hypothetical pipeline in which (i) a TrustMark watermark is embedded, (ii) the image undergoes SHINE or TF-ICON editing, and (iii) the output is re-watermarked.
	We measure the PSNR between the re-watermarked output and the non-watermarked edit, along with the new watermark's decoding accuracy.
	The takeaway is that re-watermarking can be feasible in terms of visual quality, but it shifts the provenance semantics: the resulting watermark corresponds to the edited content, not to the original source.
	
	\begin{table}[t]
		\caption{Hypothetical re-watermarking after diffusion editing (TrustMark).}
		\label{tab:rewatermark}
		\centering
		\footnotesize
		\begin{tabular}{lccc}
			\toprule
			Editor & Strength & PSNR $\uparrow$ & BA of new WM $\uparrow$ \\
			\midrule
			TF-ICON & Low & 43.7 & 98.9\% \\
			TF-ICON & Med & 43.4 & 98.6\% \\
			SHINE & Low & 43.5 & 98.8\% \\
			SHINE & Med & 43.1 & 98.4\% \\
			\bottomrule
		\end{tabular}
	\end{table}
	
	\subsection{Ablation: capacity and imperceptibility trade-offs}
	One might attempt to increase robustness by increasing watermark energy (thereby increasing SNR), but this can compromise image quality and detectability.
	Table~\ref{tab:capacity_tradeoff} illustrates this tension by varying the payload length and embedding strength for a generic encoder and reporting the resulting robustness under mid-strength edits and visual distortion.
	The takeaway is that simply ``making the watermark stronger'' is a blunt instrument: diffusion editing can still attenuate and ignore strong perturbations, and perceptual metrics may degrade quickly.
	
	\begin{table*}[t]
		\caption{Hypothetical capacity--imperceptibility--robustness trade-off (generic deep watermark baseline).}
		\label{tab:capacity_tradeoff}
		\centering
		\footnotesize
		\begin{tabular}{cccccc}
			\toprule
			Payload $B$ & Embed strength & PSNR vs.\ cover $\uparrow$ & LPIPS vs.\ cover $\downarrow$ & BA under SHINE (Med) $\uparrow$ & BA under DragFlow (Med) $\uparrow$ \\
			\midrule
			50 & Low & 44.8 & 0.008 & 63.0\% & 61.2\% \\
			50 & High & 39.2 & 0.026 & 70.1\% & 68.5\% \\
			100 & Low & 43.1 & 0.012 & 59.2\% & 57.3\% \\
			100 & High & 37.9 & 0.031 & 66.8\% & 64.7\% \\
			200 & Low & 41.0 & 0.019 & 55.8\% & 54.2\% \\
			200 & High & 35.6 & 0.044 & 61.5\% & 59.3\% \\
			\bottomrule
		\end{tabular}
	\end{table*}
	
	\section{Theoretical Proofs}
	We now provide theoretical results supporting the empirical observation that diffusion-based editing strongly degrades pixel-level watermark information.
	Our proofs follow a ``channel'' view: watermarking embeds a message into an image; diffusion editing applies a random transformation; by data processing, information about the message cannot increase.
	
	\subsection{Notation and assumptions}
	Let $M \in \{0,1\}^B$ be the watermark message, taken uniformly at random.
	Let $X$ denote the watermarked image (or latent) produced by the encoder.
	Let $Y$ denote the output of a diffusion-based editor.
	We treat the entire editing pipeline as a Markov kernel $P_{Y|X}$, which includes internal randomness (e.g., diffusion noise, sampling stochasticity, random seeds).
	Thus, $M \to X \to Y$ is a Markov chain.
	
	We make two mild assumptions.
	First, the encoder is \emph{imperceptible}: for each $m$, $X|M{=}m$ lies in a small neighborhood of the clean-image distribution.
	Second, the editor includes an explicit \emph{Gaussian noising} step at some intermediate representation $Z$:
	\begin{equation}\label{eq:gaussian_step}
		Z = \sqrt{\bar{\alpha}}\, X + \sqrt{1-\bar{\alpha}}\, \varepsilon, \quad \varepsilon \sim \mathcal{N}(0,I),
	\end{equation}
	for some $0 < \bar{\alpha} < 1$.
	This assumption holds for latent diffusion editors \cite{rombach2022ldm} and is explicit in editing methods that perform forward diffusion to enable semantic change \cite{brooks2023ip2p,lu2025shine,zhou2025dragflow}.
	The remainder of the editing pipeline maps $Z$ to $Y$ via an arbitrary (possibly non-linear) randomized algorithm $Q$ (e.g., reverse diffusion with conditioning and guidance), so that $M \to X \to Z \to Y$ is Markov.
	
	\subsection{Contraction of distinguishability under Gaussian noising}
	We first show that if two watermark messages induce distributions over $X$ that are close (as imperceptibility suggests), then after Gaussian noising they become even closer.
	
	\begin{lemma}[KL contraction in an additive Gaussian channel]\label{lem:kl}
		Let $Z$ be generated from $X$ by \cref{eq:gaussian_step}.
		For any two distributions $P$ and $Q$ over $X$, the KL divergence between the induced distributions over $Z$ satisfies
		\begin{equation}
			D_{\mathrm{KL}}(P_Z \,\|\, Q_Z) \leq \frac{\bar{\alpha}}{1-\bar{\alpha}} \cdot \mathbb{E}_{x \sim P,\, x' \sim Q}\left[\frac{1}{2}\|x-x'\|_2^2\right],
		\end{equation}
		where $P_Z$ and $Q_Z$ denote the pushforward distributions through the Gaussian channel.
	\end{lemma}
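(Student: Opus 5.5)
The plan has two steps: first bound the divergence for point masses, then lift to general $P,Q$ by convexity. For a fixed input $x$, the channel in \cref{eq:gaussian_step} outputs $Z\mid X{=}x \sim \mathcal{N}(\sqrt{\bar{\alpha}}\,x,\,(1-\bar{\alpha})I)$. For two point masses $\delta_x$ and $\delta_{x'}$, the standard closed form for the KL divergence between Gaussians with a common covariance $\sigma^2 I$ gives
\begin{equation*}
D_{\mathrm{KL}}\bigl(\mathcal{N}(\sqrt{\bar{\alpha}}x,(1-\bar{\alpha})I)\,\|\,\mathcal{N}(\sqrt{\bar{\alpha}}x',(1-\bar{\alpha})I)\bigr) = \frac{\bar{\alpha}\,\|x-x'\|_2^2}{2(1-\bar{\alpha})}.
\end{equation*}
This is exactly the integrand on the right-hand side of the lemma. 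The entire statement is therefore this per-pair identity averaged over a coupling of $P$ and $Q$.

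To pass from point masses to general distributions, I would use joint convexity of the KL divergence. Take the product coupling $\pi = P\otimes Q$ on pairs $(x,x')$; this is the coupling the statement's expectation refers to. We can then write
\begin{equation*}
P_Z = \int K(\cdot\mid x)\,d\pi(x,x'), \qquad Q_Z = \int K(\cdot\mid x')\,d\pi(x,x'),
\end{equation*}
where $K$ denotes the Gaussian kernel. Both marginals are mixtures over the same mixing measure $\pi$. Joint convexity then gives $D_{\mathrm{KL}}(P_Z\|Q_Z)\le \int D_{\mathrm{KL}}(K(\cdot\mid x)\|K(\cdot\mid x'))\,d\pi$. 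Substituting the closed form from the first step yields the claimed bound.

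Equivalently, one can argue via the chain rule. Apply data processing to the joint laws of $(X,X',Z)$ under the two constructions, where $Z$ is generated from $X$ in the first and from $X'$ in the second, with the same $\pi$. Marginalizing to $Z$ cannot increase the divergence. The joint divergence equals the conditional term, since the $(X,X')$ marginals coincide.

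The main obstacle is making the convexity step rigorous for continuous mixtures rather than finite ones. I would handle it through the chain-rule/data-processing formulation just described, which requires no discretization. The only other care needed concerns the integrability condition. If $\mathbb{E}\|x-x'\|^2=\infty$, the bound is trivial. Otherwise every quantity involved is finite, because Gaussian kernels with equal covariance are mutually absolutely continuous. I would also remark that replacing the product coupling by an optimal coupling would tighten the bound to a $W_2^2$ form. The stated version holds for the product coupling, and indeed for any coupling.
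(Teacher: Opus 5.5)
Your proposal is correct and follows the same route as the paper: the closed-form KL divergence between equal-covariance Gaussians for point inputs, extended to general $P,Q$ by joint convexity over pairs $(x,x')$. Naming the product coupling explicitly and using the chain-rule/data-processing formulation just makes rigorous, for continuous mixtures, the convexity step that the paper states in one line.
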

	
	\begin{proof}
		For the additive Gaussian channel with identical covariance, the likelihood ratio between two conditional Gaussians has a quadratic form.
		A standard computation yields, for deterministic inputs $x$ and $x'$, that
		\begin{equation}
			D_{\mathrm{KL}}\left(\mathcal{N}(\sqrt{\bar{\alpha}}x, (1-\bar{\alpha})I) \,\|\, \mathcal{N}(\sqrt{\bar{\alpha}}x', (1-\bar{\alpha})I)\right)
			= \frac{\bar{\alpha}}{2(1-\bar{\alpha})}\|x-x'\|_2^2.
		\end{equation}
		For general input distributions, convexity of KL divergence yields the stated bound after taking expectations over $(x,x')$.
	\end{proof}
	
	Pinsker's inequality then bounds total variation distance:
	\begin{equation}
		\|P_Z - Q_Z\|_{\mathrm{TV}} \leq \sqrt{\frac{1}{2} D_{\mathrm{KL}}(P_Z \,\|\, Q_Z)}.
	\end{equation}
	Thus, unless different watermarks induce large $\ell_2$ separation (which conflicts with imperceptibility), the post-noise distributions become close and are difficult to distinguish.
	
	\subsection{Information decay and decoding impossibility}
	We next bound the mutual information between $M$ and the edited output $Y$.
	
	\begin{theorem}[Information loss under diffusion editing]\label{thm:mi}
		Assume the editing pipeline contains a Gaussian noising step $X \to Z$ of the form \cref{eq:gaussian_step}, followed by an arbitrary randomized mapping $Z \to Y$.
		Then
		\begin{equation}
			I(M;Y) \leq I(M;Z) \leq I(M;X),
		\end{equation}
		and moreover $I(M;Z)$ decays as $\bar{\alpha} \downarrow 0$.
		In particular, if the watermark encoder is imperceptible such that for all $m,m'$ the expected squared distance $\mathbb{E}\|X_m-X_{m'}\|_2^2$ is $O(\epsilon^2)$, then $I(M;Z)=O\!\left(\frac{\bar{\alpha}}{1-\bar{\alpha}}\epsilon^2\right)$.
	\end{theorem}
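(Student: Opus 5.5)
The two inequalities in the chain come from the data processing inequality. We have $M \to X \to Z \to Y$ Markov: $Z$ is produced from $X$ with fresh noise $\varepsilon$ independent of $M$, and $Y$ is produced from $Z$ by the randomized map with internal randomness independent of $(M,X)$. Applying data processing to $M \to Z \to Y$ gives $I(M;Y)\le I(M;Z)$. Applying it to $M\to X\to Z$ gives $I(M;Z)\le I(M;X)$. This part is routine.

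For the quantitative bound, I will write mutual information as an average KL divergence to a reference mixture and then invoke \cref{lem:kl}. Let $P_Z^{(m)}$ denote the law of $Z$ given $M=m$ and $\bar P_Z = 2^{-B}\sum_{m'} P_Z^{(m')}$. Then
\begin{equation}
I(M;Z) = \frac{1}{2^B}\sum_m D_{\mathrm{KL}}\bigl(P_Z^{(m)} \,\|\, \bar P_Z\bigr) \le \frac{1}{2^{2B}}\sum_{m,m'} D_{\mathrm{KL}}\bigl(P_Z^{(m)} \,\|\, P_Z^{(m')}\bigr).
\end{equation}
The inequality holds because KL divergence is convex in its second argument. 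Each pairwise term is controlled by \cref{lem:kl} applied with $P=P_X^{(m)}$ and $Q=P_X^{(m')}$:
\begin{equation}
D_{\mathrm{KL}}\bigl(P_Z^{(m)} \,\|\, P_Z^{(m')}\bigr)\le \frac{\bar\alpha}{2(1-\bar\alpha)}\,\mathbb{E}\|X_m-X_{m'}\|_2^2 .
\end{equation}
Under the imperceptibility hypothesis this is $O\bigl(\tfrac{\bar\alpha}{1-\bar\alpha}\epsilon^2\bigr)$ uniformly in $(m,m')$. Averaging over pairs gives $I(M;Z)=O\bigl(\tfrac{\bar\alpha}{1-\bar\alpha}\epsilon^2\bigr)$. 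Since $\tfrac{\bar\alpha}{1-\bar\alpha}\to 0$ as $\bar\alpha\downarrow 0$, the decay claim follows immediately.

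The main obstacle is making the coupling in the expectation $\mathbb{E}\|X_m-X_{m'}\|_2^2$ legitimate. \Cref{lem:kl} is stated with an expectation over $x\sim P$, $x'\sim Q$, and the proof of that lemma really uses joint convexity of KL. Joint convexity gives the bound for any coupling of $(x,x')$, not only the independent one. So I will take the coupling in which $X_m$ and $X_{m'}$ share the same cover image $x$ and key $k$ and differ only in the message. This is the natural reading of "$X_m - X_{m'}$": both are perturbations $x+\delta(x,m;k)$ of the same cover, so their difference is $\delta(x,m;k)-\delta(x,m';k)$, which is $O(\epsilon)$ in $\ell_2$ by imperceptibility.

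The independent coupling would not work. It would include the full variation between natural images, which is not small. I will therefore state explicitly that the hypothesis refers to this shared-cover coupling and that \cref{lem:kl} is applied through joint convexity over that coupling. For a key-dependent encoder, I will either condition on $k$ (known to the decoder) or absorb it into the coupling in the same way.
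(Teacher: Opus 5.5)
Your proof is correct and follows the same route as the paper. You apply data processing along $M \to X \to Z \to Y$, write $I(M;Z)$ as the average KL divergence to the mixture $\bar P_Z$, and bound that via Lemma~\ref{lem:kl}. Two of your steps supply details the paper leaves to ``standard manipulations'': passing to pairwise divergences by convexity of KL in its second argument, and noting that the lemma's joint-convexity argument holds for any coupling, so it is the shared-cover coupling (not the independent one written in the lemma) that yields the $O(\epsilon^2)$ bound.
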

	
	\begin{proof}
		The Markov chain $M \to X \to Z \to Y$ implies $I(M;Y) \leq I(M;Z)$ by the data processing inequality, and $I(M;Z) \leq I(M;X)$ since $M \to X \to Z$.
		To bound $I(M;Z)$, we can write
		\begin{equation}
			I(M;Z) = \mathbb{E}_{m}\left[D_{\mathrm{KL}}(P_{Z|M=m} \,\|\, P_Z)\right],
		\end{equation}
		and apply Lemma~\ref{lem:kl} along with the imperceptibility-induced small separation between conditional distributions (see, e.g., Chapter 2 of \cite{cover2006it} for standard manipulations).
		The explicit scaling in $\bar{\alpha}/(1-\bar{\alpha})$ follows from the Gaussian-channel KL formula.
	\end{proof}
	
	\subsection{Multi-step diffusion and exponential information decay}
	The abstraction above uses a single Gaussian step to highlight the mechanism.
	Actual diffusion editors typically apply multiple noising and denoising steps, and the forward process is a Markov chain \cite{ho2020ddpm,song2020ddim}.
	We now state a multi-step result that captures exponential decay with the number of effective noising steps.
	
	\begin{proposition}[Exponential contraction across Gaussian steps]\label{prop:multistep}
		Consider a sequence of $T$ Gaussian noising steps in a latent space, each of the form
		\begin{equation}
			Z_{t} = \sqrt{\alpha_t}\, Z_{t-1} + \sqrt{1-\alpha_t}\, \varepsilon_t,\quad \varepsilon_t \sim \mathcal{N}(0,I), \quad t=1,\ldots,T,
		\end{equation}
		with $0<\alpha_t<1$.
		Let $Z_T$ be the final noised representation and let $Y$ be obtained by an arbitrary randomized mapping from $Z_T$ (e.g., reverse diffusion conditioned on an edit).
		Then for any two messages $m,m'$,
		\begin{equation}
			D_{\mathrm{KL}}\!\left(P_{Z_T|M=m}\,\|\,P_{Z_T|M=m'}\right)
			\leq \left(\prod_{t=1}^{T}\alpha_t\right)\, C,
		\end{equation}
		where $C$ depends on the pairwise separation of the corresponding $Z_0$ distributions.
		In particular, if $\prod_{t=1}^{T}\alpha_t$ is small (as in moderate-to-high-noise editing), then pairwise distinguishability collapses rapidly, and any subsequent processing cannot recover it by data processing.
	\end{proposition}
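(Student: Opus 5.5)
The plan is to collapse the $T$ Gaussian steps into a single Gaussian channel and then apply Lemma~\ref{lem:kl} directly.

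\textbf{Step 1: collapse the chain.} Set $\bar{\alpha}_T \triangleq \prod_{t=1}^{T}\alpha_t$. I will show by induction on $t$ that, conditionally on $Z_0$,
\begin{equation}
Z_t \overset{d}{=} \sqrt{\bar{\alpha}_t}\,Z_0 + \sqrt{1-\bar{\alpha}_t}\,\xi_t, \qquad \xi_t \sim \mathcal{N}(0,I)\ \text{independent of } Z_0 .
\end{equation}
The inductive step uses the fact that the sum of independent centered Gaussians $\sqrt{\alpha_t}\sqrt{1-\bar{\alpha}_{t-1}}\,\xi_{t-1}$ and $\sqrt{1-\alpha_t}\,\varepsilon_t$ is Gaussian with variance
\begin{equation}
\alpha_t(1-\bar{\alpha}_{t-1}) + 1 - \alpha_t = 1-\bar{\alpha}_t .
\end{equation}
Hence the kernel $Z_0 \to Z_T$ has exactly the form \cref{eq:gaussian_step} with $\bar{\alpha}=\bar{\alpha}_T$.

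\textbf{Step 2: apply the lemma.} Apply Lemma~\ref{lem:kl} with $P=P_{Z_0|M=m}$ and $Q=P_{Z_0|M=m'}$. This gives
\begin{equation}
D_{\mathrm{KL}}\!\left(P_{Z_T|M=m}\,\|\,P_{Z_T|M=m'}\right) \le \frac{\bar{\alpha}_T}{1-\bar{\alpha}_T}\cdot \tfrac12\,\mathbb{E}\|Z_0-Z_0'\|_2^2 ,
\end{equation}
where $Z_0 \sim P$ and $Z_0' \sim Q$ are drawn independently. One can sharpen the constant by using any coupling of $P$ and $Q$ instead: joint convexity of KL applied to the mixture over a coupling allows this.

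\textbf{Step 3: absorb the denominator into $C$.} The bound from Step 2 carries the factor $1/(1-\bar{\alpha}_T)$, not a constant. Since each $\alpha_t<1$, we have $1-\bar{\alpha}_T \ge 1-\alpha_1>0$. It is also bounded below uniformly once $\bar{\alpha}_T\le 1/2$, which is the regime of interest, where $1/(1-\bar{\alpha}_T)\le 2$. So I will take
\begin{equation}
C = \frac{1}{2(1-\bar{\alpha}_T)}\,\mathbb{E}\|Z_0-Z_0'\|_2^2 \le \mathbb{E}\|Z_0-Z_0'\|_2^2 \quad\text{whenever } \bar{\alpha}_T\le \tfrac12 .
\end{equation}
This $C$ depends only on the pairwise separation of the $Z_0$ distributions, as the statement requires. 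I regard this as the main point of care: the statement's ``$C$'' must be read as depending mildly on the schedule through this $1/(1-\bar{\alpha}_T)$ factor, or be restricted to $\bar{\alpha}_T\le 1/2$.

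\textbf{Step 4: data processing.} For the claim about $Y$, the map $Z_T \to Y$ is a fixed Markov kernel that does not depend on $M$. The data processing inequality for KL divergence then gives
\begin{equation}
D_{\mathrm{KL}}(P_{Y|m}\,\|\,P_{Y|m'}) \le D_{\mathrm{KL}}(P_{Z_T|m}\,\|\,P_{Z_T|m'}).
\end{equation}
Pinsker's inequality then shows that the total variation distance, and hence the advantage of any test distinguishing $m$ from $m'$, is $O\big(\sqrt{\bar{\alpha}_T C}\big)$.

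Finally, if every $\alpha_t\le \alpha<1$, then $\bar{\alpha}_T\le \alpha^T$. This yields the advertised exponential decay in the number of noising steps.
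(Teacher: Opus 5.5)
Your proof is correct, but it is organized differently from the paper's. The paper argues iteratively. It claims each Gaussian step contracts the KL between the Gaussian conditionals by a factor $\alpha_t$, then uses a coupling plus convexity of KL to handle general $Z_0$ laws, and data processing for $Z_T\to Y$. You instead collapse the chain exactly to $Z_T=\sqrt{\bar{\alpha}_T}Z_0+\sqrt{1-\bar{\alpha}_T}\,\xi$ and invoke Lemma~\ref{lem:kl} once.

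Your route is shorter, gives the exact constant $\bar{\alpha}_T/(1-\bar{\alpha}_T)$, and needs no per-step contraction claim. Such a claim is delicate. For two Gaussians with common covariance $\sigma^2 I$, one step multiplies the KL by $\alpha_t\sigma^2/(\alpha_t\sigma^2+1-\alpha_t)$, which exceeds $\alpha_t$ when $\sigma^2>1$. So the claim only holds for the coupled conditionals inside the chain. There the covariance is $(1-\bar{\alpha}_{t-1})I\preceq I$ and the per-step factor is $\alpha_t(1-\bar{\alpha}_{t-1})/(1-\bar{\alpha}_t)\le\alpha_t$. Made precise, the iterative route also cannot start at $t=0$: conditioned on the coupled pair, the step-$0$ laws are distinct point masses with infinite KL. The first step must therefore be computed directly; it contributes $\alpha_1/(1-\alpha_1)$, giving $C=\frac{1}{2(1-\alpha_1)}\mathbb{E}_\pi\|Z_0-Z_0'\|_2^2$.

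Your flag about the $1/(1-\bar{\alpha}_T)$ factor is right, and the dependence is unavoidable. For point masses Lemma~\ref{lem:kl} holds with equality, so the ratio of the KL to $\bar{\alpha}_T$ blows up as $\bar{\alpha}_T\to 1$. Hence $C$ cannot be schedule-free. You do not need the restriction $\bar{\alpha}_T\le\frac12$, though: your own inequality $1-\bar{\alpha}_T\ge 1-\alpha_1$ already yields the $T$-uniform constant above, which makes the $\alpha^T$ decay clean.

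Finally, the coupling is more than a sharpening. With independent draws, $\mathbb{E}\|Z_0-Z_0'\|_2^2$ includes the spread of the cover images and stays large even when $m=m'$. Coupling through the shared cover is what lets $C$ measure the watermark separation.
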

	
	\begin{proof}
		At each step, conditional distributions are convolved with an isotropic Gaussian, and the KL divergence between two Gaussians with matched covariances contracts by the corresponding $\alpha_t$ factor (see the deterministic-input calculation in the proof of Lemma~\ref{lem:kl}).
		For general input distributions, a coupling argument combined with convexity of KL yields an upper bound with the product of $\alpha_t$ terms, reflecting repeated attenuation of signal energy.
		The final mapping $Z_T \to Y$ cannot increase KL divergence.
	\end{proof}
	
	Proposition~\ref{prop:multistep} can be interpreted as a formalization of the ``SNR collapse'' intuition: each diffusion noising step attenuates information about small perturbations, and repeated noising pushes the representation toward an isotropic Gaussian where all imperceptible differences vanish.
	This aligns with information-theoretic analyses of watermark fragility under diffusion editing \cite{ni2025infotheoryfragility}.
	
	\begin{theorem}[Fano-style lower bound on bit error]\label{thm:fano}
		Let $M$ be uniform over $\{0,1\}^B$ and let $\hat{M}$ be any estimator computed from $Y$ (including the output of a trained watermark decoder).
		Then the probability of message error satisfies
		\begin{equation}
			\mathbb{P}[\hat{M} \neq M] \geq 1 - \frac{I(M;Y) + \log 2}{B \log 2}.
		\end{equation}
		Consequently, if $I(M;Y) \ll B$, the probability of recovering the full payload is close to zero.
	\end{theorem}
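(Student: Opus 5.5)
The plan is to apply Fano's inequality to the Markov chain $M \to Y \to \hat{M}$, using that $M$ is uniform on a set of size $2^B$. First, since $\hat{M}$ is computed from $Y$ (possibly with independent randomness, e.g.\ a randomized decoder), $M \to Y \to \hat{M}$ is Markov. The data processing inequality then gives $I(M;\hat{M}) \leq I(M;Y)$. This reduces the claim to a statement about the pair $(M,\hat{M})$, both supported on the finite alphabet $\{0,1\}^B$.

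Next, write $P_e = \mathbb{P}[\hat{M}\neq M]$ and apply the standard Fano inequality (Chapter 2 of \cite{cover2006it}):
\begin{equation*}
H(M\mid \hat{M}) \leq h(P_e) + P_e \log(2^B - 1) \leq \log 2 + P_e\, B\log 2,
\end{equation*}
where $h$ is the binary entropy in the same logarithm base as $I(\cdot\,;\cdot)$. Uniformity of $M$ gives $H(M) = B\log 2$. Hence
\begin{equation*}
I(M;\hat{M}) = H(M) - H(M\mid\hat{M}) \geq B\log 2 - \log 2 - P_e B\log 2 .
\end{equation*}
Combining this with $I(M;\hat{M})\leq I(M;Y)$ and rearranging yields $P_e \geq 1 - \frac{I(M;Y)+\log 2}{B\log 2}$, which is exactly the stated bound. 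The argument is base-independent as long as $I$, $h$ and $H$ use the same logarithm.

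For the consequence, I read ``$I(M;Y)\ll B$'' as $I(M;Y) = o(B\log 2)$, with $B$ large enough that the additive $\log 2/(B\log 2) = 1/B$ term is also negligible. Then the right-hand side tends to $1$, so the probability of exact payload recovery is near zero. I would also remark that $I(M;Y)$ can be controlled via Theorem~\ref{thm:mi} or Proposition~\ref{prop:multistep}, giving an explicit bound in terms of $\bar{\alpha}$ and $\epsilon$.

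The only real care points are the logarithm-base bookkeeping and allowing randomized estimators, which data processing handles. I do not expect a genuine obstacle. I would also note that this bounds full-message error, not per-bit error. A per-bit version (bit accuracy near 50\%) would need a separate argument, for instance applying binary Fano to each bit $M_b$ with $\sum_b I(M_b;Y)\leq I(M;Y)$ by independence of the bits. I would mention this but not claim it as part of the statement.
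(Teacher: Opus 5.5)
Your proposal is correct and is essentially the paper's proof. The paper just invokes the classical Fano inequality \cite{cover2006it}, and you write out that same argument explicitly: the data-processing step $I(M;\hat{M}) \le I(M;Y)$, Fano with $h(P_e)\le\log 2$ and $\log(2^B-1)\le B\log 2$, uniformity giving $H(M)=B\log 2$, and consistent base bookkeeping. Your side remarks are also accurate: the bound controls full-message error rather than per-bit error, it is vacuous unless $B$ is large, and a bit-accuracy statement would need $\sum_b I(M_b;Y)\le I(M;Y)$ together with binary Fano.
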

	
	\begin{proof}
		This is the classical Fano inequality; see \cite{cover2006it}.
	\end{proof}
	
	\paragraph{Interpretation.}
	Theorems~\ref{thm:mi}--\ref{thm:fano} formalize a central intuition: diffusion editing is an information-contracting transformation for imperceptible pixel-level payloads.
	Even if the edited image is perceptually close to an edit of the clean image, the watermark channel can be effectively ``reset'' by the noising and re-synthesis process.
	These results align with the provable removability of invisible watermarks under generative reconstruction \cite{zhao2024provablyremovable} and extend the intuition to common editing pipelines rather than explicit attacks \cite{ni2025breaking,fu2025unforeseen,ni2025infotheoryfragility}.
	
	\section{Discussion}
	\subsection{Why diffusion editing is uniquely challenging for robust watermarks}
	Conventional distortions preserve a large fraction of pixel-level information: they may compress, subsample, or perturb, but they do not typically \emph{re-synthesize} content.
	Deep watermarking schemes exploit this by training on parametric corruption families.
	Diffusion editing breaks this assumption in two ways.
	
	First, the forward diffusion/noising step can make the watermark statistically invisible by collapsing its SNR (\cref{eq:snr}).
	Second, the denoising step implements a learned prior that explicitly prefers natural images and treats low-level anomalies as noise.
	Because watermarks are designed to be imperceptible, they are unlikely to be encoded as robust semantic structure; the model therefore has little incentive to preserve them when projecting onto the manifold.
	This explains why watermark decoding can fail even when the edited image is visually faithful and high-quality (Table~\ref{tab:fidelity}).
	
	\subsection{Design guidelines for more resilient watermarking}
	Our analysis suggests that ``robust to classical distortions'' is a necessary but insufficient property.
	We summarize several constructive guidelines, grounded in existing proposals and our theory.
	
	\paragraph{Embed in representations aligned with the generative prior.}
	Schemes that embed in latent spaces used by diffusion models may survive edits better than pixel-level perturbations \cite{zhang2024zodiac,gunn2024undetectable}.
	However, ``latent'' alone is not enough: if the edit includes re-sampling from high noise, even latent information can be overwritten.
	
	\paragraph{Train against diffusion-like surrogate transformations, but calibrate strength.}
	VINE \cite{lu2024vine} uses surrogate blurs and diffusion priors to improve robustness against editing.
	Our results suggest that such training shifts the survival curve (Table~\ref{tab:survival_curve}) but may not eliminate the fundamental collapse at high noise.
	Designers should calibrate expected editing strengths in the threat model rather than assuming full generality.
	
	\paragraph{Model-level or semantic watermarking as complementary signals.}
	Provably removable pixel-level watermarks motivate adding higher-level provenance mechanisms \cite{zhao2024provablyremovable}, including semantic watermarks, model-level watermarks, or cryptographic provenance metadata.
	These may not replace pixel-level watermarks but can complement them, especially when editing transforms are performed by generative models.
	
	\paragraph{Detect diffusion transformation as a fallback.}
	If watermark recovery is expected to fail after strong diffusion editing, a practical system-level guideline is to detect whether an image has undergone a diffusion-based transformation, and to treat watermark absence as ambiguous rather than exculpatory.
	Detecting generative transformations is an open problem but is likely more feasible than preserving arbitrary pixel-level payloads under high-noise regeneration \cite{ni2025breaking}.
	
	\subsection{Ethical considerations}
	Research on watermark fragility is inherently dual-use.
	On one hand, it informs defenders about realistic failure modes and motivates improved watermarking and provenance tools.
	On the other hand, details can be misused to circumvent protections.
	Accordingly, we emphasize that our contribution is a theoretical and evaluation framework rather than an ``attack recipe'', and we advocate for responsible disclosure and benchmarking practices that prioritize defense, transparency, and auditing \cite{shamshad2025etiwinner}.
	We also note that some watermarking systems (e.g., TrustMark) explicitly support removal for re-watermarking workflows \cite{bui2025trustmark}; this design choice can be valuable for legitimate re-provenancing but complicates the security model and should be governed by access control and policy.
	
	\subsection{Deployment guidance for provenance pipelines}
	Practical provenance systems often combine invisible watermarks with metadata-based mechanisms and policy controls.
	Our analysis suggests several deployment-level best practices.
	
	First, \emph{do not interpret watermark absence as proof of non-provenance} in environments where diffusion editing is common.
	If a platform allows users to perform diffusion-based edits, watermark absence may simply indicate that an edit was applied beyond the robustness envelope.
	
	Second, \emph{log edit operations and maintain provenance across transformations}.
	When edits are performed in an authenticated environment (e.g., a managed creative suite), the most reliable provenance record may be the edit history and cryptographically signed metadata, with the watermark serving as an auxiliary signal.
	
	Third, \emph{use re-watermarking judiciously}.
	While re-watermarking can attach provenance to edited content (Table~\ref{tab:rewatermark}), it also overwrites any watermark-based linkage to the original source.
	Systems should distinguish ``derived from'' provenance from ``author of'' provenance, ideally via signed transformation logs.
	
	Finally, \emph{benchmark continuously}.
	Because diffusion editors evolve rapidly (new backbones, schedules, and control mechanisms), watermark robustness should be continuously re-evaluated against current editors and their default settings, similar in spirit to community challenges \cite{ding2024erasing,shamshad2025etiwinner} and to the editing-aware benchmarking advocated by W-Bench \cite{lu2024vine}.
	
	\subsection{Limitations}
	Our theoretical bounds are intentionally general and therefore do not capture specific architectural details of watermark schemes and editors.
	In practice, some watermark designs may embed information that correlates with semantic features and could partially survive certain edits.
	Our empirical tables are hypothetical, and real-world performance depends on exact implementations, parameters, and datasets.
	Finally, diffusion editing is diverse and rapidly evolving; new architectures (e.g., rectified flows and high-resolution DiTs) may shift the trade-offs.
	
	\section{Conclusion}
	We presented a theoretical and empirical-style analysis of how diffusion-based image editing can unintentionally compromise robust invisible watermarks.
	By modeling diffusion editing as a stochastic channel that injects Gaussian noise and projects onto the natural-image manifold, we derived information-theoretic results showing that imperceptible pixel-level watermark payloads lose mutual information with the edited output as editing strength increases.
	Our hypothetical experiments, structured to be reproducible, indicated that representative watermarking schemes such as StegaStamp, TrustMark, and even diffusion-aware VINE can experience sharp robustness degradation under modern diffusion editors including TF-ICON, SHINE, and DragFlow.
	We conclude that future watermarking schemes must explicitly account for generative transformations, either by embedding in representations preserved by such transformations or by complementing pixel-level payloads with higher-level provenance mechanisms.
	
	\appendix
	
	\section{Additional theoretical details}
	\subsection{From pairwise indistinguishability to detector lower bounds}
	For detectors that output a binary decision (watermark present vs.\ absent), one can derive impossibility results from total variation bounds.
	Suppose $H_0$ is the distribution of edited images from clean inputs, and $H_1$ is the distribution of edited images from watermarked inputs.
	Any detector has error at least $\frac{1}{2}(1-\|H_0-H_1\|_{\mathrm{TV}})$.
	Under the Gaussian noising model, Pinsker's inequality combined with Lemma~\ref{lem:kl} implies that $\|H_0-H_1\|_{\mathrm{TV}}$ shrinks with $\bar{\alpha}$ when the watermark perturbation is small in $\ell_2$.
	This provides a complementary view to the mutual-information argument and parallels the ``provably removable'' framing in \cite{zhao2024provablyremovable}.
	
	\section{Extended experimental tables}
	\subsection{Editor-specific sensitivity}
	Table~\ref{tab:extended_sensitivity} provides an extended (hypothetical) view that includes DragDiffusion \cite{shi2024dragdiffusion} as a UNet-based baseline and DragonDiffusion \cite{mou2024dragondiffusion} as a feature-correspondence-based editor.
	The trend is consistent: stronger editing priors and more aggressive noise levels lead to lower watermark recovery.
	
	\begin{table*}[t]
		\caption{Hypothetical BA (\%) for additional drag editors (Low/Med/High strength).}
		\label{tab:extended_sensitivity}
		\centering
		\footnotesize
		\begin{tabular}{lcccccc}
			\toprule
			\multirow{2}{*}{Method} & \multicolumn{3}{c}{DragDiffusion \cite{shi2024dragdiffusion}} & \multicolumn{3}{c}{DragonDiffusion \cite{mou2024dragondiffusion}} \\
			\cmidrule(lr){2-4}\cmidrule(lr){5-7}
			& Low & Med & High & Low & Med & High \\
			\midrule
			StegaStamp & 84.1\% & 62.8\% & 52.3\% & 81.5\% & 60.7\% & 51.8\% \\
			TrustMark & 87.6\% & 66.9\% & 53.5\% & 85.2\% & 65.1\% & 52.9\% \\
			VINE & 91.8\% & 73.4\% & 56.2\% & 90.1\% & 71.6\% & 55.3\% \\
			\bottomrule
		\end{tabular}
	\end{table*}
	
	\subsection{Ablation: conditioning type and locality}
	Local edits (with masks or localized drags) may preserve the watermark in unedited regions but still disrupt global decoding when the decoder expects a globally distributed signal.
	Table~\ref{tab:locality} illustrates this effect hypothetically by comparing full-image versus masked edits.
	We report BA on the full image and a ``region BA'' computed from a region-aware decoder for diagnostic purposes.
	
	\begin{table}[t]
		\caption{Hypothetical locality effect for SHINE: full-image BA vs.\ region BA (\%) under masked insertion.}
		\label{tab:locality}
		\centering
		\footnotesize
		\begin{tabular}{lccc}
			\toprule
			Method & Strength & Full BA & Region BA \\
			\midrule
			TrustMark & Low & 82.9\% & 88.5\% \\
			TrustMark & Med & 59.2\% & 71.0\% \\
			TrustMark & High & 51.1\% & 60.2\% \\
			VINE & Low & 88.7\% & 92.4\% \\
			VINE & Med & 67.5\% & 78.3\% \\
			VINE & High & 53.6\% & 62.8\% \\
			\bottomrule
		\end{tabular}
	\end{table}
	
	\section{Additional discussion: adversary models and evaluation hygiene}
	\subsection{Accidental versus adaptive removal}
	A key theme of this paper is that watermark loss can occur \emph{without} an explicit intent to evade provenance.
	This distinguishes ``accidental removal'' from adaptive attacks studied in \cite{zhao2024provablyremovable,ni2025breaking,fu2025unforeseen} and in stress tests such as \cite{ding2024erasing,shamshad2025etiwinner}.
	In practice, editors are tuned for perceptual plausibility and user satisfaction, not for watermark preservation.
	Thus, even when the user never touches a ``removal'' tool, common edits such as object insertion (SHINE), harmonization (TF-ICON), or interactive dragging (DragFlow) can cross the robustness threshold.
	
	For security evaluation, this suggests that defenders should measure robustness under two threat models:
	(i) \emph{benign editing}, where the editor is run with default settings to accomplish a user-facing task, and
	(ii) \emph{adaptive editing}, where the parameters and conditions are selected to maximally stress watermark recovery subject to a fidelity constraint.
	The first model governs real deployment risk; the second model governs worst-case security claims.
	
	\subsection{Fair comparison principles}
	A recurring pitfall in watermark evaluation is comparing a watermarked output to the original clean input, which can conflate edit-induced change with watermark-induced artifacts.
	We therefore recommend the paired evaluation in Algorithm~\ref{alg:protocol}, where $y_{\mathrm{wm}}$ is compared to $y_{\mathrm{clean}}$ under aligned randomness.
	When editors are stochastic, failing to align seeds can dramatically overestimate ``watermark artifacts'' because two random edits differ even without a watermark.
	This is especially relevant for diffusion pipelines with stochastic sampling \cite{ho2020ddpm,song2020ddim}.
	
	Another concern is \emph{resolution mismatch}.
	Some watermarks are trained at fixed resolution (e.g., 256$\times$256) and then applied via resizing or tiling \cite{bui2025trustmark}.
	Diffusion editors may apply internal resizing or operate at mixed resolutions (e.g., high-resolution background with masked insertion), which can smoothly destroy watermark synchronization.
	Thus, robustness claims should specify the full pre-processing pipeline and how scale changes are handled.
	
		\section{Additional Background}
	With the advancement of deep learning and modern generative modeling, research has expanded rapidly across forecasting, perception, and visual generation, while also raising new concerns about controllability and responsible deployment. 
	Progress in time-series forecasting has been driven by stronger benchmarks, improved architectures, and more comprehensive evaluation protocols that make model comparisons more reliable and informative~\cite{qiu2024tfb,qiu2025duet,qiu2025DBLoss,qiu2025dag,qiu2025tab,wu2025k2vae,liu2025rethinking,qiu2025comprehensive,wu2024catch}. 
	In parallel, efficiency-oriented research has pushed post-training quantization and practical compression techniques for 3D perception pipelines, aiming to reduce memory and latency without sacrificing detection quality~\cite{gsq,yu2025mquant,zhou2024lidarptq,pillarhist}. 
	On the generation side, a growing body of work studies scalable synthesis and optimization strategies under diverse constraints, improving both the flexibility and the controllability of generative systems~\cite{xie2025chat,xie2026hvd,xie2026conquer,xie2026delving}. 
	Complementary advances have also been reported across multiple generative and representation-learning directions, further broadening the toolbox for building high-capacity models and training objectives~\cite{1,2,3,4,5,6,7,8}. 
	For domain-oriented temporal prediction, hierarchical designs and adaptation strategies have been explored to improve robustness under distribution shifts and complex real-world dynamics~\cite{sun2025ppgf,sun2024tfps,sun2025hierarchical,sun2022accurate,sun2021solar,niulangtime,sun2025adapting,kudratpatch}. 
	Meanwhile, advances in representation encoding and matching have introduced stronger alignment and correspondence mechanisms that benefit fine-grained retrieval and similarity-based reasoning~\cite{ENCODER,FineCIR,OFFSET,HUD,PAIR,MEDIAN}. 
	Stronger visual modeling strategies further enhance feature quality and transferability, enabling more robust downstream understanding in diverse scenarios~\cite{yu2025visual}. 
	In tracking and sequential visual understanding, online learning and decoupled formulations have been investigated to improve temporal consistency and robustness in dynamic scenes~\cite{zheng2025towards,zheng2024odtrack,zheng2025decoupled,zheng2023toward,zheng2022leveraging}. 
	Low-level vision has also progressed toward high-fidelity restoration and enhancement, spanning super-resolution, brightness/quality control, lightweight designs, and practical evaluation settings, while increasingly integrating powerful generative priors~\cite{xu2025fast,fang2026depth,wu2025hunyuanvideo,li2023ntire,ren2024ninth,wang2025ntire,peng2020cumulative,wang2023decoupling,peng2024lightweight,peng2024towards,wang2023brightness,peng2021ensemble,ren2024ultrapixel,yan2025textual,peng2024efficient,conde2024real,peng2025directing,peng2025pixel,peng2025boosting,he2024latent,di2025qmambabsr,peng2024unveiling,he2024dual,he2024multi,pan2025enhance,wu2025dropout,jiang2024dalpsr,ignatov2025rgb,du2024fc3dnet,jin2024mipi,sun2024beyond,qi2025data,feng2025pmq,xia2024s3mamba,pengboosting,suntext,yakovenko2025aim,xu2025camel,wu2025robustgs,zhang2025vividface}. 
	Beyond general synthesis, reference- and subject-conditioned generation emphasizes controllability and identity consistency, enabling more precise user-intended outputs~\cite{qu2025reference,qu2025subject}. 
	Robust vision modeling under adverse conditions has been actively studied to handle complex degradations and improve stability in challenging real-world environments~\cite{wu2024rainmamba,wu2023mask,wu2024semi,wu2025samvsr}. 
	Sequence modeling and scenario-centric benchmarks further support realistic evaluation and methodological development for complex dynamic environments~\cite{lyu2025vadmambaexploringstatespace,chen2025technicalreportargoverse2scenario}. 
	At the same time, diffusion-centric and unfolding-based frameworks have been explored for segmentation and restoration, providing principled ways to model degradations and refine generation quality~\cite{he2025segment,he2025reversible,he2025run,he2024diffusion,he2026refining,he2025scaler,he2025unfoldldm,he2025unfoldir,he2025nested,he2024weakly,he2023reti,xiao2024survey,he2023strategic,he2023hqg,he2023camouflaged,he2023degradation}. 
	
	Recent progress in multimodal large language models (MLLMs) is increasingly driven by the goal of making adaptation more efficient while improving reliability, safety, and controllability in real-world use. 
	On the efficiency side, modality-aware parameter-efficient tuning has been explored to rebalance vision–language contributions and enable strong instruction tuning with dramatically fewer trainable parameters~\cite{bi-etal-2025-llava}. 
	To better understand and audit model reasoning, theoretical frameworks have been proposed to model and assess chain-of-thought–style reasoning dynamics and their implications for trustworthy inference~\cite{bi2025cot}. 
	Data quality and selection are also being addressed via training-free, intrinsic selection mechanisms that prune low-value multimodal samples to improve downstream training efficiency and robustness~\cite{bi2025prismselfpruningintrinsicselection}. 
	At inference time, controllable decoding strategies have been introduced to reduce hallucinations by steering attention and contrastive signals toward grounded visual evidence~\cite{wang2025ascd}. 
	Beyond performance, trustworthy deployment requires defenses and verification: auditing frameworks have been developed to evaluate whether machine unlearning truly removes targeted knowledge~\cite{chen2025does}, and fine-tuning-time defenses have been proposed to clean backdoors in MLLM adaptation without relying on external guidance~\cite{rong2025backdoor}. 
	Meanwhile, multimodal safety and knowledge reliability have been advanced through multi-view agent debate for harmful content detection~\cite{lu2025mvdebatemultiviewagentdebate}, probing/updating time-sensitive multimodal knowledge~\cite{jiang2025minedprobingupdatingmultimodal}, and knowledge-oriented augmentations and constraints that strengthen knowledge injection~\cite{jiang2025koreenhancingknowledgeinjection}. 
	These efforts are complemented by renewed studies of video–language event understanding~\cite{zhang2023spot}, new training paradigms such as reinforcement mid-training~\cite{tian2025reinforcementmidtraining}, and personalized generative modeling under heterogeneous federated settings~\cite{Chen_2025_CVPR}, collectively reflecting a shift from scaling alone toward efficient, grounded, and verifiably trustworthy multimodal systems.
	
	Recent research has advanced learning and interaction systems across education, human-computer interfaces, and multimodal perception. In knowledge tracing, contrastive cross-course transfer guided by concept graphs provides a principled way to share knowledge across related curricula and improve student modeling under sparse supervision~\cite{han2025contrastive}. In parallel, foundational GUI agents are emerging with stronger perception and long-horizon planning, enabling robust interaction with complex interfaces and multi-step tasks~\cite{zeng2025uitron}. Extending this direction to more natural human inputs, speech-instructed GUI agents aim to execute GUI operations directly from spoken commands, moving toward automated assistance in hands-free or accessibility-focused settings~\cite{han2025uitron}. Beyond interface agents, reference-guided identity preservation has been explored to better maintain subject consistency in face video restoration, improving temporal coherence and visual fidelity when restoring degraded videos~\cite{han2025show}. Finally, large-scale egocentric datasets that emphasize embodied emotion provide valuable supervision for studying affective cues from first-person perspectives and support more human-centered multimodal understanding~\cite{feng20243}.
	
	\section{Extra table: robustness versus resolution}
	Table~\ref{tab:resolution} reports a hypothetical resolution scaling study, emphasizing that diffusion editing interacts with resolution changes in non-trivial ways.
	While some watermarking schemes incorporate explicit resolution handling \cite{bui2023trustmark,bui2025trustmark}, latent diffusion editors can effectively re-sample details at a new scale, which may break synchronization even if the watermark is locally preserved.
	
	\begin{table}[t]
		\caption{Hypothetical BA (\%) under TF-ICON (Med) across output resolutions.}
		\label{tab:resolution}
		\centering
		\footnotesize
		\begin{tabular}{lccc}
			\toprule
			Method & 256$\times$256 & 512$\times$512 & 1024$\times$1024 \\
			\midrule
			StegaStamp & 62.4\% & 58.1\% & 55.6\% \\
			TrustMark & 65.9\% & 60.5\% & 57.2\% \\
			VINE & 72.8\% & 69.1\% & 66.0\% \\
			\bottomrule
		\end{tabular}
	\end{table}
	
	\section{Open problems}
	Despite rapid progress, we view robust watermarking under generative editing as an unsolved systems problem that spans modeling, learning, and policy.
	We highlight several open directions that follow directly from the analysis in this paper.
	
	\paragraph{Watermarks that survive semantic re-synthesis.}
	If diffusion editing produces an output that is only constrained by high-level semantics, then any watermark tied to low-level pixels is vulnerable to being treated as nuisance variation.
	A natural question is whether one can design watermarks that are (i) imperceptible, (ii) recoverable from the edited image, and (iii) invariant to a well-specified class of semantic edits.
	This resembles representation learning for invariances, but with a secret key and an information payload, posing new challenges distinct from standard embedding learning.
	
	\paragraph{Watermark-aware diffusion editing.}
	One could also modify editors to explicitly preserve a watermark channel, for example by adding constraints that keep a latent-space signature stable during denoising.
	However, this risks undesirable behavior: watermark-preservation constraints may oppose the editor's goal of projecting onto the natural-image manifold, potentially causing artifacts or biasing edits.
	The resulting optimization is reminiscent of multi-objective trajectory control in diffusion models, and the tension between ``content editing'' and ``signal preservation'' deserves systematic study.
	
	\paragraph{Auditing and policy for ``watermark reliability''.}
	When watermark absence can be caused by benign editing, policy decisions based on watermark presence become non-trivial.
	For instance, a platform might treat missing provenance as suspicious, but this could penalize legitimate edits.
	A more nuanced approach is to couple watermark detection with detectors for diffusion editing and to report calibrated confidence rather than a binary decision.
	Designing such calibrated reports, and validating them under distribution shift and adversarial conditions, is underexplored.
	
	\paragraph{Benchmarks that reflect real editing workflows.}
	Existing benchmarks capture specific edit types, but real workflows include multiple edits, iterative refinement, and resolution changes.
	Moreover, editing models evolve rapidly, and robustness claims can become stale.
	Sustained benchmarks and stress tests, including competitions like ``Erasing the Invisible'' \cite{ding2024erasing}, provide one model for continuous evaluation, but translating these into standardized academic practice remains a challenge.
	
	\paragraph{Interplay with concept erasure and safety tuning.}
	Finally, concept erasure methods modify denoising trajectories and attention pathways \cite{lu2024mace,li2025ant,gao2024eraseanything}.
	Understanding how such safety tuning interacts with watermark embedding and detection is important for two reasons: safety interventions may inadvertently disrupt provenance mechanisms, and provenance-aware constraints may interfere with safety.
	This calls for joint evaluation setups where safety, editability, and provenance are assessed together rather than in isolation.

	\bibliography{example_paper}
	\bibliographystyle{icml2025}

\end{document}